\newtheorem{Theorem}{Theorem}
\newtheorem{lemma}{Lemma}
\newtheorem{remark}{Remark}
\begin{document}

\title{\bf{On the stochastic engine of transmittable diseases in exponentially growing populations}}
\author{Torsten Lindstr\"{o}m \\
Department of Mathematics \\
Linn{\ae}us University \\
SE-35195 V\"{a}xj\"{o}, SWEDEN}
\date{}
\maketitle

\maketitle

\begin{abstract}
The purpose of this paper is to analyze the mechanism for the interplay of deterministic and stochastic models for contagious diseases. Deterministic models for contagious diseases are prone to predict global stability. Small natural birth and death rates in comparison to disease parameters like the contact rate and the removal rate ensures that the globally stable endemic equilibrium corresponds to a tiny average proportion of infected individuals. Asymptotic equilibrium levels corresponding to low numbers of individuals invalidate the deterministic results.

Diffusion effects force frequency functions of the stochastic model to possess similar stability properties as the deterministic model. Particular simulations of the stochastic model predict, however, oscillatory patterns. Small and isolated populations show longer periods, more violent oscillations, and larger probabilities of extinction.

We prove that evolution maximizes the infectiousness of the disease as measured by the ability to increase the proportion of infected individuals. This holds provided the stochastic oscillations are moderate enough to keep the proportion of susceptible individuals near a deterministic equilibrium.

We close our paper with a discussion of the herd-immunity concept and stress its close relation to vaccination-programs.
\end{abstract}

\paragraph{Keywords}
  global stability, infectious disease, birth-death process, extinction, community size

\paragraph{AMS Subject classification} 34C23, 34C60, 34D23, 60J28, 92D30

\section{Introduction}
\label{intro}

Already Kermack and McKendrick (1927)\nocite{Kermack.ProcCRSocLonA:115} discovered many properties of the epidemic bell-shaped curve. Their first result was that there exists a threshold density of susceptible individuals such that if the density of susceptible individuals exceeds that threshold, then there will be an epidemic. They used the term excess to describe the difference between the density of susceptible individuals and the threshold density. Their second result states that if the excess is small, then the epidemic curve is symmetric and the eventual number of ill corresponds to the double of the excess (Figure \ref{ev_number_ill}(b)). Such results do not hold for large excesses but their third result states that a very infectious disease exhausts almost the whole population of susceptible individuals. The general result is that the eventual number of ill is an increasing function of the infectiousness of the disease, cf. Figure \ref{ev_number_ill}(a).

\begin{figure}
\epsfxsize=128mm
\begin{picture}(264,155)(0,0)
\put(0,0){\epsfbox{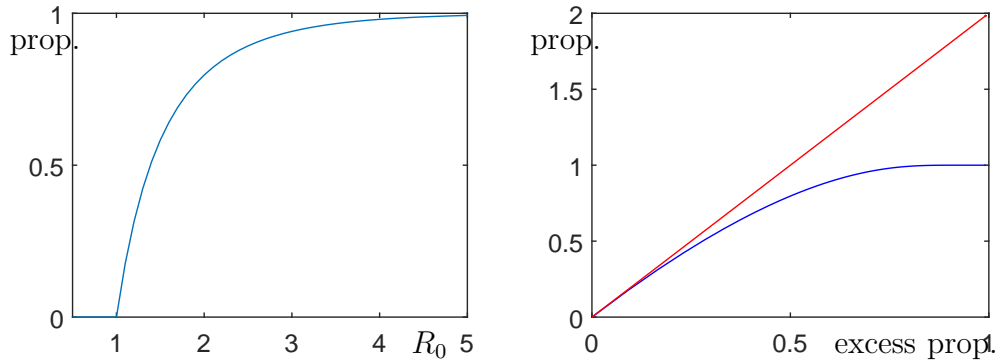}}


\put(145,0){$R_0$}
\put(-7,115){prop.}

\put(305,0){excess prop.}
\put(190,115){prop.}
\end{picture}
\caption{(a) The eventual proportion of ill (blue curve) is an increasing function of $R_0$. (b) The eventual proportion of ill (blue curve) increases with the excess and is approximately the double of the excess (red line) for a small epidemics. The model used for computation is (\protect\ref{Kermack}) with $R_0=\rho_0$.}
\label{ev_number_ill}
\end{figure}

The most well-known measure of the infectiousness of a disease is $R_0$. It is a dimensionless quantity giving the number of secondary infections resulting from each infection in an entirely susceptible population and measures the ability of the disease to increase the number of infected individuals. In many cases, it is the product of the contact rate and the time spent by an infected individual in the population. Evolution of diseases tends to increase a closely related parameter that we baptize as $\rho_0$ (see Section \ref{global-stability}) that measures the ability of the disease to increase the proportion of infected individuals. In order to increase the time that an infected individual spends in the population, we expect diseases to become less virulent with time (see e. g Smith (1887)\nocite{Smith.AMNat:21}). However, evolution maximizes a product and neither the contact rate nor the time in the population. If evolutionary changes that allows an increased contact rate are available, then the pathogen does not need to become less virulent.

Many pre-vaccination pandemic disease control strategies (lock-downs, social distancing, testing, contact tracing, and quarantine restrictions) aim at making the disease less infectious and hence to decrease $R_0$. Kermack and McKendrick (1927)\nocite{Kermack.ProcCRSocLonA:115} predicts that the eventual number of ill decreases as such measures are applied. Testing, contact tracing and quarantine restrictions removes infected individuals from the population and thus, decrease the time spent by an infected individual in the population. Such measures are specific in the sense that they attack a selected specific pathogen. The drawback with these measures is that their efficiency drops rapidly with the size of the epidemic and that a surplus of technical equipment and medical competence does not exist in the beginning of a new pandemic.

Lock-downs and social distancing are general measures for large epidemics. They do not possess the capacity and competence bottlenecks mentioned above but their drawback are the huge costs on society level limiting their application time. In the huge cities there are also bottleneck problems as stairs, elevators and almost no real possibility to move efficiently without public transportation etc. that decrease the
possibilities of implementing any lock-downs efficiently over longer periods. Their objective must be to increase the contact-tracing and testing capacity considerably in order to preserve $R_0$ afterwards. All disease control strategies do not have an impact on $R_0$ alone. Regional quarantine and travel restrictions are such examples. Apart from having an impact on the contact rate, they split the population into smaller units magnifying the probability of local extinction of the disease.

Germann, Kadau, Longini, and Macken (2006)\nocite{Germann.PNAS:103} criticized the classical results and concluded that some of the disease control strategies does just delay the pandemic without affecting the eventual number of ill. We suspect that the authors did never validate these simulation results against Kermack and McKendrick (1927)\nocite{Kermack.ProcCRSocLonA:115} in limiting cases. A far more severe criticism against the classical models is their tendency to predict global stability in contrast to the usually observed outbreaks. These global stability properties of epidemic models are extremely robust. Delays are sources of instabilities for a large number of systems (see e. g Smith (2011)\nocite{smith_delays}) but they are not in general sources of instabilities for epidemic models (McCluskey (2010)\nocite{McCluskey.NA.RWA:11}).

Another possible source of instability is the seasonality of the disease. There is substantial evidence for that many pathogens are more infectious during certain seasons (may be wet, dry, cold, humid, or hot seasons). Evidence for
period doubling routes to chaos (Aron and Schwartz (1984)\nocite{aron}, Glendinning and Perry (1997)\nocite{Glendinning.JoMB:35}, Keeling, Rohani, and Grenfell (2001)\nocite{Keeling.physicaD:148}, and
Olsen and Schaffer (1990)\nocite{Olsen.Science:249}) and deterministic chaos (Barrentos, \'{A}ngel-Rodr\'{i}\-guez, and Ruiz-Herrera (2017)\nocite{Barrentos.JoMB:75}) connected to seasonal epidemic models exists.

A dominant pattern in all epidemic oscillations is, however, the tendency to display patterns that are more erratic and possess longer periods with a lower community size and populations that are more isolated, cf. Bartlett (1957)\nocite{Bartlett.JRSS_A:120}. There are evidence for such patterns for many diseases and we obtain the most striking evidence when comparing long term European or North-American pre-vaccination data to corresponding data from the more isolated population
at Iceland (Cliff and Haggett (1980)\nocite{Cliff.JHygiene:85}, Galazka (1991)\nocite{Galazka.Epi_Inf:107}, T{{\'{o}}}masson and {{\"{O}}}gmundsd{{\'{o}}}ttir (1975)\nocite{Tomasson:ActaPath:83}, Gudmundsd{{\'{o}}}ttir, Antonsd{{\'{o}}}ttir, Gudnad{{\'{o}}}ttir, Elefsen, Einarsd{{\'{o}}}ttir, {{\'{O}}}lafsson, and Gudnad{{\'{o}}}ttir (1985)\nocite{Gudmundsdottir.BoWHO:73}). Deterministic periodic solutions or chaos should be an outcome of a selection of disease
and population parameters that is not related to community size.

The second serious problem with many epidemiological models is the lack of structural stability, cf. Guckenheimer and Holmes (1983)\nocite{guck}. This problem is visible already in the models by Kermack and McKendrick (1927)\nocite{Kermack.ProcCRSocLonA:115} and up to our knowledge this problem has not so far been discussed in the literature with reference to epidemic models. The characteristic of a structurally unstable model is that its predictions are sensitive to various assumptions made in the model. One example is the bell-shaped epidemic curve mentioned above. It has the tendency to approach zero both before and after its maximum giving the impression that an epidemic does not remain in the population after invading it in contrast to the above-mentioned global stability results.

Attempts to remedy the structural stability problem exists in the literature, and already Kermack and McKendrick (1932)\nocite{Kermack.ProcCRSocLonA:138} made an attempt to include both migration and births to their model. The most common way is to include a migration rate in the model (Bailey (1964)\nocite{Bailey.Elements}) but this results in a model that is not closed with respect to the population. A subsequent problem is that the introduced parameters are not easy to compare to each other.

Some authors work with the constant population case, see e. g. Brauer and Castillo-Ch{\'{a}}vez (2001)\nocite{Brauer.Mathmod} and Arino, McCluskey, and van den Driessche (2003)\nocite{Arino.SIAMJAM:64} but this does not remove the structural stability problem. None of these attempts sheds light on the parameter combinations that are typical for epidemic models, ie the disease parameters are acting on a much faster scale than the population parameters. One attempt is also to work with bounded populations (Bremermann and Thieme (1989)\nocite{Bremermann}). Also here, the problem of comparing the introduced parameters in a natural way to each other persists.

Our approach to remedy the structural stability problem is to consider an epidemic model allowing for exponential growth of the population. Previous attempts to use such an approach can be found in e. g Anderson and May (1982)\nocite{Andersson.Parasitology:85}. We perform a transformation into density variables ensuring a precise and complete qualitative analysis on a compact set. In the density formulation, the role of $R_0$ as a threshold parameter is lost. It is replaced by the closely related parameter $\rho_0$ mentioned above. In the constant population case these two parameters are equal, ie $R_0=\rho_0$.

For most diseases, the natural birth and death parameters tend to operate on a slower scale than typical disease parameters like the contact rate and the removal rate. After an epidemic, this parameter asymmetry forces the number of infected individuals to low equilibrium numbers making differential equations approximations of birth-death processes (Bailey (1964)\nocite{Bailey.Elements}) invalid. This constitutes a mechanism for the observed instabilities and explains why not only the model parameters but also the population size plays an important role as a source of instability for infectious diseases. A number of mechanisms that might result in stochastic oscillations in epidemic models possibly together with other mechanisms like seasonal forcing have been listed and analyzed in Black and McKane (2010)\nocite{Black.JRSI:7}. We note that none of these mechanisms is the one announced here and end up with a model that neither have been formulated nor been studied before.

The structure of our paper is as follows: We formulate the deterministic version of the model in Section \ref{modelsection}. We list basic properties of stochastic models in Section \ref{modelsection}, too. We make a transition to density coordinates for the deterministic model in Section \ref{density-coordinates} but still keep our original abundance based model in mind for later analysis. Density coordinates guarantees the existence of a compact invariant set for the deterministic model. We consider the evolutionary development of the involved pathogens in Section \ref{global-stability} and prove a global stability theorem that completes the qualitative analysis of our model. This global stability theorem shows that evolution tends to maximize the newly introduced parameter $\rho_0$. This results is in agreement with existing results asserting that evolution maximizes $R_0$ for bounded populations (Bremermann and Thieme (1989)\nocite{Bremermann}).

We follow up with a discussion of relations between the involved parameters and their consequences for the application of various disease control strategies in Section \ref{real-param}. We formulate the complete stochastic version by returning to the abundance based model from Section \ref{modelsection} in Section \ref{stoch_form}. We analyze it numerically in a number of ways and compare its predicted densities to the deterministic damped oscillations. In Section \ref{vacc}, we discuss vaccination campaigns and their close relation to the herd-immunity concept. Section \ref{sum} contains a summary of our results.

\section{The model}
\label{modelsection}
We consider the following system of nonlinear ordinary differential equations
\begin{eqnarray}
  \dot{S}&=&\lambda N-\beta_1 S\frac{I_1}{N}-\beta_2 S\frac{I_2}{N}-\omega S-\mu S,\nonumber\\
  \dot{I}_1&=&\beta_1 S\frac{I_1}{N}-\gamma_1 I_1 -\mu I_1,\nonumber\\
  \dot{I}_2&=&\beta_2 S\frac{I_2}{N}-\gamma_2 I_2 -\mu I_2,\label{number_model}\\
  \dot{R}&=&\gamma_1 I_1+\gamma_2 I_2+\omega S-\mu R,\nonumber
\end{eqnarray}
and its corresponding stochastic counterparts. Its variables are $S$, the number of susceptible individuals, $I_1$,
the number individuals infected with strain 1, $I_2$, the number individuals infected with strain 2,
and finally, $R$, the number of removed. The removed individuals may consist of either recovered immune individuals or infected individuals in quarantine. The parameters are the birth rate $\lambda$, the contact rate $\beta_1$ for
strain 1, the contact rate $\beta_2$ for strain 2, the mortality $\mu$, the removal rate $\gamma_1$ for strain 1, and the removal rate $\gamma_2$ for strain 2. The final parameter $\omega$ is a vaccination rate enabling the removal of susceptible individuals from the population. The model has many similarities with some of the models studied in Anderson and May (1982)\nocite{Andersson.Parasitology:85}, but our analysis and further extensions of this model are different.

There is a balance between the explanatory value of a model and the possibilities to include phenomena of interest. Our intention is to keep (\ref{number_model}) as restricted as possible in order to allow for an analysis of models that includes population size as an important model parameter later on. We have included two strains in the model at this stage in order to assess the impact of pathogen evolution on the various parameters of the model. As soon as this is known, we remove the second strain. Several strains could, in principle, be included but the subsequent analysis would not alter the results.

We consider neither the possibility of co-infection with both strains, partial or waning immunity, nor an excess mortality due to any or both of the infections. We experience later, that asymptotic proportions of infectious individuals remain at low levels and that even lower average proportions must hold for possible co-infected proportions. We also assume that evolution of new strains operates on a slow time-scale granting at least some partial immunity for possible new strains in the removed class. The reason for assuming no excess mortality is that this assumption makes the model easier to check in limiting cases. Similarly, we assume that the vaccine is effective and leads to complete immunity against both strains. Arino, McCluskey, and van den Driessche (2003)\nocite{Arino.SIAMJAM:64} considered the possibilities for using different vaccination strategies and included vaccine efficiency as a parameter.
We shall see that the model (\ref{number_model}) grants generic structural stability (Guckenheimer and Holmes (1983)\nocite{guck}) anyway. Our results regarding long run dynamical properties can therefore, be generalized to models that include small perturbations in the above mentioned or other neglected directions. Nevertheless, it turns out that the probably much smaller birth-rate $\lambda$ is of substantial importance for structural stability (see Section \ref{real-param}). One of the reasons for considering a structurally stable model that is kept simple is the ongoing CoViD-19 discussion. There are reasons to keep in mind what such models actually state about disease transmission and discuss the interpretation of these results.

If we use the assumption of no excess mortality, the whole population grows exponentially since
\begin{equation}
  \dot{N}=\dot{S}+\dot{I}_1+\dot{I}_2+\dot{R}=\lambda N-\mu S-\mu I_1-\mu I_2-\mu R=(\lambda-\mu)N,
\label{whole-pop-diff-eq}
\end{equation}
with $N(0)=N_0$. Thus, the whole population obeys $N(t)=N_0\exp((\lambda-\mu)t)$. Our assumptions give rise to a closed generically structurally stable prototype model allowing for several precise analytical statements and numerical
results that can be validated in a substantial number of limiting cases.

We shall consider the deterministic model (\ref{number_model}) side by side with its stochastic counterparts. The corresponding stochastic model for the whole population is well-known, see e. g Bailey (1964)\nocite{Bailey.Elements}. Indeed, consider the corresponding birth-death Markov process
with state space ${\bf{N}}=\{0,1,2,\dots\}$. Let $p_N(t)$ be the probability that the population size is $N$ at time $t$. We assume that
\begin{eqnarray*}
  {\rm{Chance \: of \: one \: birth}}&=& \lambda N(t)\delta t+o(\delta t),\\
  {\rm{Chance \: of \: one \: death}}&=& \mu N(t)\delta t+o(\delta t),\\
  {\rm{Chance \: of \: more \: than \: one \: death/birth}}&=& o(\delta t)
\end{eqnarray*}
during a small time interval $\delta t$. The differential-difference equations
\begin{eqnarray}
  \dot{p}_N &=& \lambda(N-1)p_{N-1}-(\lambda+\mu)Np_N+\mu(N+1)p_{N+1},\: N=1,2,3\dots \nonumber\\
  \dot{p}_0 &=& \mu p_1\label{birth-death-diff-diff}
\end{eqnarray}
on ${\bf{R}}^\infty$ gives the probabilities that the population has a given size $N$ at time $t$. If we know that the initial population size is $N_0$, then $p_{N_0}(0)=1$ and $p_N(0)=0$ for $N\neq N_0$. Bailey (1964)\nocite{Bailey.Elements} proved that (\ref{birth-death-diff-diff}) has the explicit solution
\begin{eqnarray}
  p_N(t) &=& \sum_{j=0}^{\min(N_0,N)}\left(\begin{array}{c}N_0\\j\end{array}\right)\left(\begin{array}{c}N_0+N-j-1\\N_0-1\end{array}\right)\alpha^{N_0-j}\beta^{N-j}(1-\alpha-\beta)^j, \nonumber\\
  p_0(t) &=& \alpha^{N_0},\label{extinction_formula}\\
  {\rm{with}} && \alpha =\frac{\mu(1-e^{-(\lambda-\mu)t})}{\lambda-\mu e^{-(\lambda-\mu)t}}\:\:\:\:{\rm{and}}\:\:\:\:
  \beta =\frac{\lambda(1-e^{-(\lambda-\mu)t})}{\lambda-\mu e^{-(\lambda-\mu)t}}\nonumber
\end{eqnarray}
for $N=1,2,3,\dots$ and $N=0$, respectively. The extinction state containing zero individuals is an absorbing state indicated by the fact that the probability of extinction increases with time. The eventual probability of extinction equals one if $\lambda\leq\mu$ and is $(\mu/\lambda)^{N_0}$ when $\lambda>\mu$.
We conclude that the eventual probability of extinction decreases rapidly with initial population size if $\lambda>\mu$. We shall use the above results for validation of our simulation results in limiting cases later on. Bailey (1964)\nocite{Bailey.Elements} also found that
\begin{eqnarray*}
  E(N(t)|N(0)=N_0)&=&N_0\exp((\lambda-\mu)t)\\
  V(N(t)|N(0)=N_0)&=&\frac{N_0(\lambda+\mu)}{(\lambda-\mu)}\exp((\lambda-\mu)t)(\exp((\lambda-\mu)t)-1)
\end{eqnarray*}
meaning that the expected value of the introduced stochastic process coincides with the deterministic value. Unfortunately, such nice results do not hold in many nonlinear cases . Stochastic counterparts of very simple nonlinear models like the e. g Lotka (1925)\nocite{Lotka.Elements} and Volterra (1926)\nocite{Volterra.Mem} do not possess this property, cf. Bharucha-Reid (1960)\nocite{Bharucha-Reid.Elements}.

The differential-difference equations (\ref{birth-death-diff-diff}) must keep the total probability invariant. This is easy to check for (\ref{birth-death-diff-diff}). We add the following lemma with proof because this turns out to be an important technique for validating more complicated sets of differential-difference equations used in the numerical work later on.
\begin{lemma}
The total probability $\sum_{N=0}^{\infty}p_N(t)$ is an invariant for \em (\ref{birth-death-diff-diff}) \em and the total contribution to changes in the total probability related to each of the parameters are zero.
\label{probability_invariance_lemma}
\end{lemma}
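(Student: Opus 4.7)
The plan is to differentiate the series $\sum_{N=0}^{\infty}p_N(t)$ term by term, substitute the right-hand sides of (\ref{birth-death-diff-diff}), and re-index the resulting sums to display the desired cancellations separately for the $\lambda$-contributions and the $\mu$-contributions. Formally, I would write
\begin{equation*}
  \frac{d}{dt}\sum_{N=0}^{\infty}p_N(t) \;=\; \mu p_1 + \sum_{N=1}^{\infty}\bigl[\lambda(N-1)p_{N-1}-(\lambda+\mu)Np_N+\mu(N+1)p_{N+1}\bigr],
\end{equation*}
split the bracket into a $\lambda$-part and a $\mu$-part, and handle each part in isolation.

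For the $\lambda$-contribution, the two relevant sums are $\sum_{N=1}^{\infty}\lambda(N-1)p_{N-1}$ and $-\sum_{N=1}^{\infty}\lambda N p_N$. Setting $M=N-1$ in the first sum turns it into $\sum_{M=0}^{\infty}\lambda M p_M$, whose $M=0$ term vanishes, so this exactly matches the second sum and the net $\lambda$-contribution is zero. For the $\mu$-contribution, the relevant terms are $\mu p_1$ (inherited from $\dot p_0$), together with $-\sum_{N=1}^{\infty}\mu N p_N$ and $\sum_{N=1}^{\infty}\mu(N+1)p_{N+1}$. Shifting $M=N+1$ in the last sum rewrites it as $\sum_{M=2}^{\infty}\mu M p_M$; combining the three pieces gives $\mu p_1-\mu p_1-\sum_{N\geq 2}\mu N p_N+\sum_{M\geq 2}\mu M p_M=0$. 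In particular, the boundary equation $\dot p_0=\mu p_1$ is precisely the term required to make the $\mu$-balance close, which is the structural point worth emphasising because it will recur in the more elaborate multivariate systems treated later.

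The main obstacle, and the only non-bookkeeping step, is justifying the termwise differentiation and the rearrangements of the infinite series. I would discharge this by invoking the explicit solution (\ref{extinction_formula}): it shows that $p_N(t)$ and $N p_N(t)$ decay geometrically in $N$ for each fixed $t$ (indeed $\beta<1$ whenever $t<\infty$), so the series $\sum p_N$, $\sum N p_N$, and the series of derivatives converge uniformly on compact time intervals. This supplies the dominated/uniform convergence needed to exchange $d/dt$ with the sum and to regroup terms, whence $\sum_{N=0}^\infty p_N(t)$ is constant in $t$; since its initial value is $1$, the total probability is invariant. Grouping the cancellations by the parameter that produced them establishes the second assertion of the lemma and sets up the verification template that the paper will reuse for its stochastic SIR-type equations.
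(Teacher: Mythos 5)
Your proof is correct and follows essentially the same route as the paper's: split the derivative of the total probability into the birth ($\lambda$) and death ($\mu$) contributions, re-index each sum, and observe that each group cancels separately, with the boundary term $\mu p_1$ closing the death-process balance. The only difference is that you also justify the termwise differentiation and rearrangement via the decay of $p_N(t)$ from the explicit solution, a point the paper passes over in silence.
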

\begin{proof}
It is most instructive to divide the computation into
\begin{eqnarray*}
\mu p_1(t)-\mu\sum_{N=1}^\infty Np_N(t)+\mu\sum_{N=1}^\infty(N+1)p_{N+1}(t)&=&\\
\mu p_1(t)-\mu\sum_{N=1}^\infty Np_N(t)+\mu\sum_{M=2}^\infty Mp_{M}(t)&=&0
\end{eqnarray*}
and
\begin{eqnarray*}
\lambda\sum_{N=1}^\infty(N-1)p_{N-1}(t)-\lambda\sum_{N=1}^\infty Np_N(t)&=&\\
\lambda\sum_{M=0}^\infty Mp_{M}(t)-\lambda\sum_{N=1}^\infty Np_N(t)&=&0.
\end{eqnarray*}
Therefore, the contributions to the changes in the total probability are zero term by term for both the death and the birth process.
\end{proof}

The interchange between the birth-death process (\ref{birth-death-diff-diff}) and the differential equation model (\ref{whole-pop-diff-eq}) describes the transition from the stochastic dynamics that is governing for a small number of individuals to the approximating deterministic dynamics ruling the situation when the population is large and the probability of extinction remains negligible. Any understanding of the dynamics of infectious diseases requires both frameworks. One of the objectives of this paper is to make the fundamental parameter asymmetries that force any epidemic into such a model interaction explicit.

\section{A compact simplex}
\label{density-coordinates}

The exponential growth for (\ref{number_model}) alluded to in Section \ref{modelsection} has a drawback. Limit sets of all solutions are not
compact quantities. Many statements regarding limit sets of the corresponding solutions would remain less precise. A transition to a density formulation will cure this problem since it will project all solutions to a compact simplex.

The introduction of density coordinates will not imply that we leave the abundance formulation but merely that we work with these two formulations side by side. The abundance formulation is inherent in the stochastic modeling approach that we introduced in Section \ref{modelsection} and we therefore, return to the abundance formulation in Section \ref{stoch_form}. Before this, we must understand the mechanism behind the transition from deterministic to stochastic dynamics in epidemic models and this transition is most precisely described in density formulation.

We introduce the density coordinates
\begin{displaymath}
x(t)=\frac{S(t)}{N(t)},\:y(t)=\frac{I_1(t)}{N(t)},\:z(t)=\frac{I_2(t)}{N(t)},\:u(t)=\frac{R(t)}{N(t)}
\end{displaymath}
and get through the chain rule
\begin{eqnarray}
  \dot{x}&=&\lambda -\lambda x-\beta_1 xy-\beta_2 xz-\omega x,\nonumber\\
  \dot{y}&=&\beta_1 xy-\gamma_1 y-\lambda y,\nonumber\\
  \dot{z}&=&\beta_2 xz-\gamma_2 z-\lambda z,\label{density-model}\\
  \dot{u}&=&\gamma_1 y+\gamma_2 z+\omega x-\lambda u.\nonumber
\end{eqnarray}
If we now add all equations, we get
\begin{equation}
  \dot{x}+\dot{y}+\dot{z}+\dot{u}=\lambda-\lambda x-\lambda y-\lambda z-\lambda u=\lambda(1-x-y-z-u).
  \label{simplex-de}
\end{equation}
Equation (\ref{simplex-de}) implies that solutions starting at the plane $x+y+z+u=1$ remain in that plane. Moreover, we have $x=0\Rightarrow \dot{x}=\lambda>0$, $y=0\Rightarrow \dot{y}\geq 0$, $z=0\Rightarrow \dot{z}\geq 0$, $u=0\Rightarrow \dot{u}\geq 0$ meaning that all solutions remain in the compact simplex $x+y+x+u=0$, $x\geq\epsilon>0$, $y\geq 0$, $z\geq 0$, $u\geq 0$. The simplex is an invariant compact set containing all solutions of interest for (\ref{density-model}). The inequality $x\geq\epsilon>0$ ensures that an infectious disease can never exhaust the whole population at the same time, a classical result already stated in Kermack and McKendrick (1927)\nocite{Kermack.ProcCRSocLonA:115}.

Since the first three equations do not contain the last variable, it suffices to consider the first three equations. We get
\begin{eqnarray}
  \dot{x}&=&\lambda-\lambda x-\beta_1 xy-\beta_2 xz-\omega x=\beta_1 x\left(\frac{\lambda (1-x)-\omega x}{\beta_1 x}-y\right)-\beta_2 xz,\nonumber\\
  &=&f_i(x)F_i(x)-yf_1(x)-zf_2(x)\nonumber\\
  \dot{y}&=&\beta_1 xy-\gamma_1 y-\lambda y=y\beta_1\left(x-\frac{\gamma_1+\lambda}{\beta_1}\right)=y\psi_1(x),\label{3D-model}\\
  \dot{z}&=&\beta_2 xz-\gamma_2 z-\lambda z=z\beta_2\left(x-\frac{\gamma_2+\lambda}{\beta_2}\right)=z\psi_2(x),\nonumber
\end{eqnarray}
with $f_i(x)=\beta_ix$, $F_i(x)=(\lambda (1-x)-\omega x)/\beta_i x$, $\psi_i(x)=\beta_ix-\gamma_i-\lambda$, $i=1,2$. The introduced notation will save some work. It follows that we can work on the compact invariant simplex $x+y+z\leq 1$, $x\geq\epsilon>0$, $y\geq 0$, $z\geq 0$ from now on.

Note that the natural mortality disappears from the
model in its density formulation. Since the natural mortality removes infected individuals from the population, $R_0$ cannot be a threshold parameter for (\ref{density-model}) and we return later to the threshold parameters of the density-formulated model. The reason is that $R_0$ expresses changes in the numbers of infected individuals and not changes in proportions of infected individuals.

\section{Global stability and evolutionary optimization}
\label{global-stability}

Model (\ref{3D-model}) contained two strains of the same disease with proportions $y$ and $z$ and we did not consider the possibility of coinfection of both strains.
The strains were assumed to have different contact rates $\beta_i$, $i=1,2$ and different removal rates $\gamma_i$, $i=1,2$.  The following theorem grants competitive exclusion (cf. Hardin(1960)\nocite{Hardin.Science:131})
regardless of stable or oscillatory dynamics.
\begin{Theorem}
    If $\gamma_2-\gamma_1>0$ and $\beta_2-\beta_1<\gamma_2-\gamma_1$, then $y$ outcompetes $z$ in \em (\ref{3D-model}).\em
  \label{global_evolution}
\end{Theorem}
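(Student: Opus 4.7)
The natural object to study is the log-ratio $\ln(y/z)$, since the two infected equations are linear in $y$ and $z$ respectively and share the same driving variable $x$. Provided $y(0),z(0)>0$, positivity of $y$ and $z$ is preserved by $\dot y=y\psi_1(x)$ and $\dot z=z\psi_2(x)$, so the log-ratio is well defined and satisfies
\begin{equation*}
\frac{d}{dt}\ln\!\frac{y}{z}=\psi_1(x)-\psi_2(x)=(\beta_1-\beta_2)x+(\gamma_2-\gamma_1).
\end{equation*}
The plan is to show that the right-hand side is bounded below by a strictly positive constant on the compact simplex and then integrate.

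I would verify the positivity of $(\beta_1-\beta_2)x+(\gamma_2-\gamma_1)$ over $x\in[0,1]$ by splitting on the sign of $\beta_1-\beta_2$. If $\beta_1\geq\beta_2$, then the expression is non-decreasing in $x$ and attains its minimum $\gamma_2-\gamma_1>0$ at $x=0$. If $\beta_1<\beta_2$, the expression is decreasing, so its minimum over $[0,1]$ is attained at $x=1$ and equals $(\gamma_2-\gamma_1)-(\beta_2-\beta_1)>0$ by the second hypothesis. Setting
\begin{equation*}
\delta:=\min\!\bigl(\gamma_2-\gamma_1,\;(\gamma_2-\gamma_1)-(\beta_2-\beta_1)\bigr)>0,
\end{equation*}
gives the uniform lower bound $\frac{d}{dt}\ln(y/z)\geq\delta$ along every trajectory in the simplex.

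Integrating, $y(t)/z(t)\geq (y(0)/z(0))\,e^{\delta t}$. Because the simplex invariance from (\ref{simplex-de}) forces $y(t)\leq 1$, this yields $z(t)\leq (z(0)/y(0))\,e^{-\delta t}\to 0$ as $t\to\infty$, which is the competitive exclusion claim. The same estimate is independent of the dynamics of $x$, so it applies regardless of whether the residual $(x,y)$-subsystem equilibrates or oscillates, matching the theorem's wording.

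I do not anticipate a serious obstacle: the argument reduces to a single linear inequality on $[0,1]$. The one point to be slightly careful about is the case $\beta_1<\beta_2$, where a naïve comparison of thresholds $(\gamma_i+\lambda)/\beta_i$ is not obviously monotone in $\gamma_i,\beta_i$; that is precisely where the hypothesis $\beta_2-\beta_1<\gamma_2-\gamma_1$ is essential, and it enters exactly when $x$ is pushed to the upper end of the simplex. If the statement is intended to also include boundary initial data $z(0)=0$, one simply notes that $z\equiv0$ is invariant for $\dot z=z\psi_2(x)$ and the conclusion is vacuous.
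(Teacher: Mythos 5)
Your proposal is correct and follows essentially the same route as the paper: both reduce the claim to the positivity of the affine function $\psi_1(x)-\psi_2(x)=(\beta_1-\beta_2)x+(\gamma_2-\gamma_1)$ on $[0,1]$, checked at the two endpoints, and then conclude that the ratio $y/z$ grows while $y$ stays bounded. Your use of $\ln(y/z)$ with the explicit uniform bound $\delta>0$ and the resulting estimate $z(t)\leq (z(0)/y(0))e^{-\delta t}$ is a slightly sharper finish than the paper's terse ``$y/z$ increases, hence $z\to 0$,'' but it is the same argument in substance.
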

\begin{proof}
  The formula for the derivative of a quotient gives
  \begin{displaymath}
  d\frac{\frac{y(t)}{z(t)}}{dt}=\frac{y\psi_1(x)z-z\psi_2(x)y}{z^2}=\frac{y}{z}(\psi_1(x)-\psi_2(x))=\frac{y}{z}((\beta_1-\beta_2)x-(\gamma_1-\gamma_2)).
  \end{displaymath}
  We compute
  \begin{eqnarray*}
    (\psi_1-\psi_2)(0)&=&-\gamma_1+\gamma_2>0\\
    (\psi_1-\psi_2)(1)&=&\beta_1-\gamma_1-\beta_2+\gamma_2>0.
  \end{eqnarray*}
  The conditions of the theorem grant $(\psi_1-\psi_2)(x)>0$ for all $x\in[0,1]$. The reason is that $\psi_1-\psi_2$ is an affine function. Hence, $y/z$ increases along solutions of (\ref{3D-model}). Bounded proportions gives that $z$ must tend to zero.
\end{proof}
We visualize the competitive exclusion region in the $(\beta_2-\beta_1)-(\gamma_2-\gamma_1)$-plane cf. Figure \ref{val_regio}.
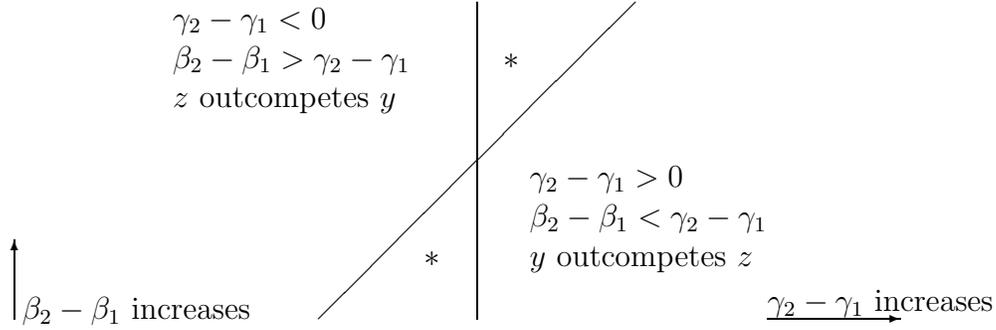
\begin{figure}
  \begin{picture}(360,150)(0,0)
    \put(180,0){\line(0,1){120}}
    \put(120,0){\line(1,1){120}}
    \put(200,50){$\gamma_2-\gamma_1>0$}
    \put(200,35){$\beta_2-\beta_1<\gamma_2-\gamma_1$}
    \put(200,20){$y$ outcompetes $z$}
    \put(160,20){$\ast$}
    \put(65,110){$\gamma_2-\gamma_1<0$}
    \put(65,95){$\beta_2-\beta_1>\gamma_2-\gamma_1$}
    \put(65,80){$z$ outcompetes $y$}
    \put(190,95){$\ast$}
    \put(290,0){\vector(1,0){50}}
    \put(290,3){$\gamma_2-\gamma_1$ increases}
    \put(5,0){\vector(0,1){30}}
    \put(8,0){$\beta_2-\beta_1$ increases}
  \end{picture}
  \caption{Validity region of Theorem \protect\ref{global_evolution}. Competitive exclusion might be violated in the $\ast$-marked regions despite the global stability principle in Theorem \protect\ref{glob-stab-thm}. We discuss the reasons in the subsequent sections.}
  \label{val_regio}
\end{figure}

Model (\ref{3D-model}) has either one, two, or three equilibria. The disease-free equilibrium always exists at $(\lambda/(\lambda+\omega),0,0)$. It is the only equilibrium if
\begin{displaymath}
x_1=\frac{\gamma_1+\lambda}{\beta_1}\geq \frac{\lambda}{\lambda+\omega}, \:\:{\rm{and}} \:\: x_2=\frac{\gamma_2+\lambda}{\beta_2}\geq  \frac{\lambda}{\lambda+\omega}.
\end{displaymath}
In this case, the disease-free equilibrium is globally asymptotically stable since we have $\dot{y}\leq 0$ and $\dot{z}\leq 0$ for valid proportions and it is the only invariant set in the domains $\dot{y}=0$ and $\dot{z}=0$, respectively.
Additional endemic equilibria exist at $(x_1,F_1(x_1),0)$ when $0<x_1<\lambda/(\lambda+\omega)$ and at $(x_2,0,F_2(x_2))$ when $0<x_2<\lambda/(\lambda+\omega)$. We note that if the conditions of Theorem \ref{global_evolution} hold, then $\psi_1-\psi_2$ must be positive
in a neighborhood of $[x_1,x_2]$ and as stated, grant competitive exclusion in equilibrium conditions. It is possible to prove more and indeed, the following global stability theorem holds.
\begin{Theorem}
  Assume that $x_1<\min(x_2,\lambda/(\lambda+\omega))$. Then the endemic equilibrium $(x_1,F_1(x_1),0)$ is globally asymptotically stable in the positive octant and the positive $xy$-plane for \em (\ref{3D-model}). \em
  \label{glob-stab-thm}
\end{Theorem}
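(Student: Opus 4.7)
The plan is to build a Goh--Volterra--Korobeinikov Lyapunov function on the compact simplex furnished by (\ref{simplex-de}) and invoke LaSalle's invariance principle. Write $y_1 = F_1(x_1)$, so $(x_1,y_1,0)$ is the endemic equilibrium whose existence is guaranteed by $x_1 < \lambda/(\lambda+\omega)$. I would take
$$V(x,y,z) = \bigl(x - x_1 - x_1\ln(x/x_1)\bigr) + \bigl(y - y_1 - y_1\ln(y/y_1)\bigr) + z,$$
which is nonnegative on $\{x>0,\,y>0,\,z\geq 0\}$, vanishes only at $(x_1,y_1,0)$, and is proper on the compact invariant simplex identified in Section \ref{density-coordinates}.

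Next I would differentiate $V$ along (\ref{3D-model}), exploiting the two equilibrium identities $\lambda = (\lambda+\omega)x_1 + \beta_1 x_1 y_1$ and $\beta_1 x_1 = \gamma_1+\lambda$. Rewriting $\dot{x}$ as $-(\lambda+\omega+\beta_1 y_1)(x-x_1) - \beta_1 x(y-y_1) - \beta_2 xz$ and forming $(1-x_1/x)\dot{x} + (1-y_1/y)\dot{y} + \dot{z}$, the mixed cross term $\beta_1(x-x_1)(y-y_1)$ coming from the susceptible balance cancels exactly against the contribution of $(1-y_1/y)\dot{y} = \beta_1(y-y_1)(x-x_1)$, and one is left with
$$\dot{V} = -(\lambda+\omega+\beta_1 y_1)\,\frac{(x-x_1)^2}{x} + \beta_2(x_1 - x_2)\,z.$$
The first term is nonpositive; the coefficient of the second is strictly negative by the hypothesis $x_1 < x_2$, so $\dot{V}\leq 0$ throughout, with $\dot{V}=0$ forcing both $x=x_1$ and $z=0$.

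To finish, I would apply LaSalle on the compact invariant simplex: on the candidate set $\{x=x_1,\,z=0\}$ the $x$-equation reduces to $\dot{x} = \beta_1 x_1(y_1 - y)$, so invariance forces $y=y_1$, and the largest invariant subset of $\{\dot V=0\}$ is the singleton $\{(x_1,y_1,0)\}$. Global attractivity in the positive octant and in the interior of the $xy$-plane follows, and Lyapunov stability is automatic since $V$ is a strict Lyapunov function near the equilibrium. The one place requiring foresight rather than calculation is the choice of the coefficient of $z$ in $V$: with an undetermined coefficient $c$, the $z$-contribution to $\dot V$ is $\beta_2 z\bigl[(c-1)x + x_1 - c x_2\bigr]$, and only $c=1$ kills the $x$-dependence and converts it into the constant $\beta_2(x_1-x_2)$ whose sign is controlled by the hypothesis. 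Everything else is a mechanical repetition of the classical Korobeinikov SIR argument, with the extra work concentrated in verifying that the strain-2 contribution can be absorbed into this linear-in-$z$ penalty term.
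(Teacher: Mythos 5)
Your proposal is correct and is essentially the paper's own argument: the paper's Lyapunov function (\ref{lyap-fcn}) is exactly your $V$ up to weights (it uses $x_2$ on the $x$- and $y$-terms and $x_1$ on the $z$-term instead of your $1,1,1$), since $\int_{x_1}^{x}\psi_1(x')/f_1(x')\,dx' = (x-x_1)-x_1\ln(x/x_1)$ and the $y$-integral is the corresponding Volterra term. The different weighting only changes the $z$-contribution to the derivative from your constant-coefficient $\beta_2(x_1-x_2)z$ to the paper's $\beta_2 x(x_1-x_2)z$ --- nonpositive either way under $x_1<x_2$ --- and both arguments conclude with LaSalle's invariance theorem on the compact simplex.
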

\begin{proof}
  Consider the Lyapunov function (cf. Lindstr\"{o}m (1994)\nocite{konflul})
  \begin{equation}
    W(x,y,z)=x_2\left(\int_{x_1}^x\frac{\psi_1(x^\prime)}{f_1(x^\prime)}dx^\prime+\int_{F_1(x_1)}^{y}\frac{y^\prime-F_1(x_1)}{y^\prime}dy^\prime\right)+x_1\int_0^zdz^\prime.
    \label{lyap-fcn}
  \end{equation}
  We compute
  \begin{eqnarray*}
    \dot{W}&=&x_2\psi_1(x)(F_1(x)-y)-x_2\frac{\psi_1(x)}{f_1(x)}f_2(x)z+x_2(y-F_1(x_1))\psi_1(x)+x_1z\psi_2(x)\\
    &=&x_2\psi_1(x)(F_1(x)-F_1(x_1))-x_2\frac{\psi_1(x)}{f_1(x)}f_2(x)z+x_1z\psi_2(x)\\
    &=&x_2\psi_1(x)(F_1(x)-F_1(x_1))+\frac{z}{f_1(x)}\left(x_1\psi_2(x)f_1(x)-x_2f_2(x)\psi_1(x)\right)\\
    &=&x_2\psi_1(x)(F_1(x)-F_1(x_1))+\frac{z\beta_1\beta_2x}{f_1(x)}\left(x_1(x-x_2)-x_2(x-x_1)\right)\\
    &=&x_2\psi_1(x)(F_1(x)-F_1(x_1))+\frac{z\beta_1\beta_2x^2}{f_1(x)}\left(x_1-x_2\right)\leq 0
  \end{eqnarray*}
  The last inequality holds since $F_1$ is decreasing. Global stability follows by LaSalle's (1960)\nocite{LaSalle.IRETCT:7} invariance theorem.
\end{proof}
\begin{remark}\em
It follows that the endemic equilibrium $(x_2,0,F_2(x_2))$ is globally asymptotically stable in the positive octant and the positive $xz$-plane if $x_2<\min(x_1,\lambda/(\lambda+\omega))$ for (\ref{3D-model}) by the same argument.
\em \end{remark}
We could end our analysis at this point since we know all qualitative properties of the deterministic system (\ref{3D-model}).
Evolution selects the strain minimizing $x_i$, $i=1,2$ and competitive exclusion of the model follows.  Yet, we shall see that the global stability theorem above is not sufficient for granting neither competitive exclusion nor exclusion of oscillations (see Section \ref{stoch_form}).
Precisely as rare violations of the competitive exclusion principle might
exist (Lindstr\"{o}m (1999)\nocite{szeged1} and references therein), violations could exist here, in principle.

However, deterministic violations can never occur since we proved  the global stability of the endemic equilibrium. Subsequent waves of the disease should decrease in amplitude with respect to the distance measure defined by the Lyapunov function (\ref{lyap-fcn}). The good news are that the mechanism for stochastic oscillations that we are going to detect later will be prone to create violent oscillations among the infected individuals and not in the susceptible individuals. Violations of the above competitive exclusion principle requires violent oscillations among the susceptibles.

We remark that analogous results have been stated for bounded populations (Bremermann and Thieme (1989)\nocite{Bremermann}). However, the hypothetic possibility for violations of the corresponding competitive exclusion principle through stochastic oscillations due small equilibrium densities of the population of infected individuals has not been made visible.

\section{A huge parameter asymmetry}
\label{real-param}

An important conclusion of the previous section was that it suffices to consider the two-dimensional system
\begin{eqnarray}
  \dot{x}&=&\lambda-\lambda x-\beta xy-\omega x=\beta x\left(\frac{\lambda (1-x)-\omega x}{\beta x}-y\right)=f(x)(F(x)-y),\nonumber\\
  \dot{y}&=&\beta xy-\gamma y-\lambda y=y\beta\left(x-\frac{\gamma+\lambda}{\beta}\right)=y\psi(x)\label{2D-model}
\end{eqnarray}
with $f(x)=\beta x$, $\psi(x)=\beta x -\gamma-\lambda$, and
\begin{equation}
  F(x)=\frac{\lambda(1-x)-\omega x}{\beta x}
  =\frac{\lambda+\omega}{\beta}\left(\frac{\frac{\lambda}{\lambda+\omega}-x}{x}\right)=\frac{\lambda+\omega}{\beta}\left(\frac{\hat{x}-x}{x}\right)
\label{hyperbola}
\end{equation}
in most cases. The results of Section \ref{global-stability} implies that evolution tends to maximize the dimensionless parameter
\begin{displaymath}
\rho_0=\frac{\beta}{\lambda+\gamma}.
\end{displaymath}
A related dimensionless quantity is $R_0$, the basic reproduction number. In (\ref{number_model}) with just one strain (cf. (\ref{number_model_red})), it is defined by
\begin{displaymath}
R_0=\frac{\beta}{\gamma+\mu}=\beta\cdot\frac{1}{\gamma+\mu}.
\end{displaymath}
Taking expectations of the exponential distribution gives that the expected time infected individuals remain infected in the modeled population is given by $1/(\gamma+\mu)$. Either natural death or recovery/quarantine can remove
infected individuals and these two processes occur independently from each other. If disease control measures are applied, then contact tracing and quarantine increase the removal rate.

Putting $R_0>1$ implies that the number of infected individuals increases in a completely susceptible population. It is not equivalent to an increasing proportion of infected individuals.
We illustrate one case with $\rho_0\leq \hat{x}$ ($\hat{x}$ was introduced in (\ref{hyperbola})) in Figure \ref{picture_series1}(a) and one case with $\rho_0>\hat{x}$ in Figure \ref{picture_series1}(b). In the pre-vaccination case these cases correspond to $\rho_0\leq 1$ and $\rho_0> 1$, respectively.

\begin{figure}
\epsfxsize=128mm
\begin{picture}(264,310)(0,0)
\put(0,0){\epsfbox{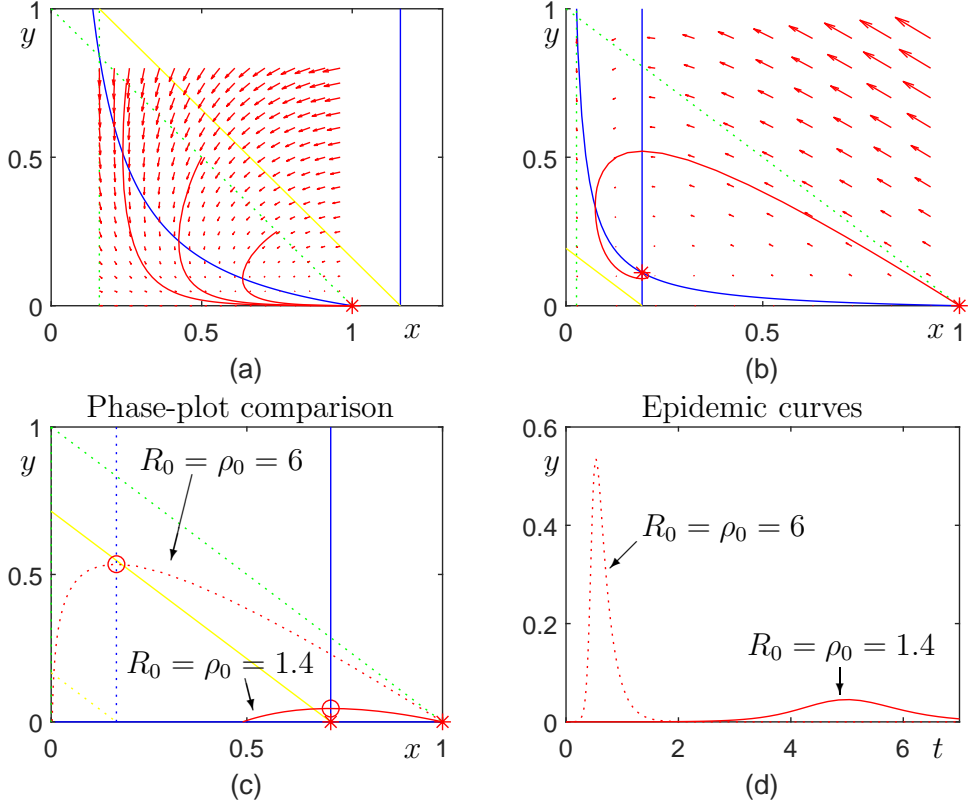}}
\put(150,175){$x$}
\put(5,287){$y$}

\put(348,175){$x$}
\put(203,287){$y$}%

\put(150,16){$x$}
\put(5,125){$y$}
\put(30,145){Phase-plot comparison}
\put(50,125){$R_0=\rho_0=6$}
\put(70,123){\vector(-1,-4){8}}
\put(45,48){$R_0=\rho_0=1.4$}
\put(90,46){\vector(1,-4){3}}

\put(350,16){$t$}
\put(203,125){$y$}
\put(240,145){Epidemic curves}
\put(240,100){$R_0=\rho_0=6$}
\put(238,98){\vector(-1,-1){10}}
\put(280,55){$R_0=\rho_0=1.4$}
\put(315,50){\vector(0,-1){10}}
\end{picture}
\caption{(a) The disease-free equilibrium of (\protect\ref{2D-model}) is globally asymptotically stable for $\lambda=.8$, $\omega=0$, $\beta=5$, and $\gamma=5$. The herd-immunity region (yellow triangle) is automatically approached and the herd-immunity concept is irrelevant. (b) The endemic equilibrium of (\protect\ref{2D-model}) is globally asymptotically stable for $\lambda=.8$, $\omega=0$, $\beta=30$, and $\gamma=5$. The herd-immunity region (yellow triangle) is not a domain of attraction but transients might approach it occasionally. The herd-immunity concept does not make sense. (c) The epidemic in (\protect\ref{Kermack}) exhausts almost the whole population for $\lambda=0$, $\omega=0$, $\beta=30$, and $\gamma=5$ (dotted red curve) whereas approximately half of the population is affected after the epidemic for $\lambda=0$, $\omega=0$, $\beta=7$, and $\gamma=5$ (solid red curve). We approach the corresponding herd-immunity regions (dotted and solid yellow triangles), but the model is not structurally stable. (d) The proportion of infected individuals against time  for $\lambda=0$, $\omega=0$, $\beta=30$, and $\gamma=5$ (dotted red curve) and for $\lambda=0$, $\omega=0$, $\beta=7$, and $\gamma=5$ (solid red curve). }
\label{picture_series1}
\end{figure}
We now prove that small perturbations in the modeling assumptions cannot alter the qualitative dynamical properties of (\ref{2D-model}). In this sense, the relevant parameters are included in the model. We remark that the proof given below can be extended in order to prove that the system (\ref{3D-model}) is generically structurally stable, too.
\begin{Theorem}
Model \em (\ref{2D-model}) \em is structurally stable when $\lambda>0$ and $\rho_0\neq\hat{x}$.
\end{Theorem}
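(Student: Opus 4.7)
My plan is to apply Peixoto's theorem for planar $C^1$ vector fields to (\ref{2D-model}) on the compact invariant triangle $T=\{(x,y):x\geq 0,\,y\geq 0,\,x+y\leq 1\}$ identified in Section \ref{density-coordinates}. Peixoto's characterization requires four things: every equilibrium is hyperbolic, every periodic orbit is hyperbolic, there are no saddle connections, and the non-wandering set consists only of equilibria and periodic orbits. Because $\dot x=\lambda>0$ on $\{x=0\}$ and the edge $\{x+y=1\}$ is either inward (when $\omega>0$) or invariant with flow directed toward $\{y=0\}$ (when $\omega=0$), the transverse/invariant behaviour on $\partial T$ persists under $C^1$-small perturbations, so it suffices to verify the Peixoto conditions on $T$ itself.

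For the hyperbolicity of equilibria I would compute Jacobians directly. At the disease-free equilibrium $(\hat x,0)$ the matrix is upper-triangular with eigenvalues $-(\lambda+\omega)<0$ and $\beta\hat x-(\gamma+\lambda)$, the second being non-zero because the non-degeneracy hypothesis $\rho_0\neq\hat x$ excludes the transcritical bifurcation at which the disease-free and endemic branches merge. When the endemic equilibrium $(x_1,F(x_1))$ exists, a short computation gives trace $-\lambda-\beta F(x_1)-\omega<0$ and determinant $\beta(\gamma+\lambda)F(x_1)>0$, so it is a hyperbolic sink. To rule out closed orbits I would use the Dulac function $B(x,y)=1/(xy)$, yielding
\begin{equation*}
\frac{\partial}{\partial x}(B\dot x)+\frac{\partial}{\partial y}(B\dot y)=-\frac{\lambda}{x^2 y}<0
\end{equation*}
on the open interior, where the strict inequality is exactly the hypothesis $\lambda>0$. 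Bendixson--Dulac then excludes closed orbits in the interior, and the invariant edge $\{y=0\}$ carries the monotone one-dimensional flow $\dot x=\lambda(1-x)-\omega x$, which has none either. The non-wandering set therefore reduces to the (at most two) equilibria.

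For the saddle-connection condition, when no endemic equilibrium exists $(\hat x,0)$ is a stable node and there are no saddles. When it does exist, $(\hat x,0)$ becomes the unique saddle, its stable manifold lies along $\{y=0\}\cap T$, and its unstable manifold enters the interior of $T$, where Theorem \ref{glob-stab-thm} applied with a single strain forces it to converge to the endemic sink; no saddle-to-saddle trajectory can form. Peixoto's theorem then delivers structural stability. The step I anticipate as the main obstacle is giving a clean formulation of Peixoto's theorem on the manifold-with-corners $T$, which I would handle by building a smooth collar that extends the flow to a closed disk while preserving the inward orientation on $\partial T$. A secondary point is that the linearization above is hyperbolic precisely when $\rho_0\hat x\neq 1$, which coincides with the paper's stated condition $\rho_0\neq\hat x$ in the pre-vaccination regime $\omega=0$, $\hat x=1$ emphasized in Figure \ref{picture_series1}.
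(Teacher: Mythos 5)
Your proposal is correct and follows the same overall strategy as the paper---show the flow is Morse--Smale on a compact invariant region and invoke the classical structural stability theorem (you cite Peixoto, the paper cites Palis--Smale; for planar flows these deliver the same sufficiency criterion)---but you fill in the steps the paper leaves implicit, and one of your ingredients is genuinely different. Where the paper excludes nontrivial recurrence by appealing to the global stability already established via the Lyapunov function (\ref{lyap-fcn}), you rule out closed orbits independently with the Dulac function $B=1/(xy)$, whose divergence $-\lambda/(x^2y)$ makes transparent exactly where the hypothesis $\lambda>0$ enters; this is a self-contained alternative that does not lean on Theorem \ref{glob-stab-thm}. Your explicit Jacobian computation also pays off: the eigenvalue of the disease-free equilibrium in the $y$-direction vanishes precisely when $\hat{x}=1/\rho_0$, so the correct non-degeneracy condition is $\rho_0\hat{x}\neq 1$ rather than the stated $\rho_0\neq\hat{x}$; the two coincide only in the pre-vaccination case $\omega=0$, $\hat{x}=1$, and your closing remark in effect corrects the theorem's hypothesis for $\omega>0$, a point the paper's one-line assertion of hyperbolicity glosses over. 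Your attention to the corner structure of the triangle (the collar extension needed to state Peixoto's theorem cleanly) likewise addresses something the paper does not discuss. Two small slips, neither fatal: the edge $\{x+y=1\}$ is not invariant when $\omega=0$, since there $\dot{x}+\dot{y}=-\gamma y<0$ for $y>0$, so the flow is strictly inward except at the corner $(1,0)$, which in that case is itself the disease-free equilibrium $(\hat{x},0)$; and in the saddle case you could close the argument for the unstable manifold by Poincar\'{e}--Bendixson directly from your own Dulac estimate (no interior closed orbits plus a unique interior sink) without importing Theorem \ref{glob-stab-thm} at all.
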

\begin{proof}
The condition $\rho_0\neq\hat{x}$ grants that the fixed points $(\hat{x},0)$ and
\begin{displaymath}
\left(\frac{1}{\rho_0},F\left(\frac{1}{\rho_0}\right)\right)
\end{displaymath}
are hyperbolic when they exist. Since one of these fixed points is globally stable in the positive quadrant, saddle connections can exist only when $(\hat{x},0)$ is a saddle. In this case the positive axis at $y=0$ is the stable manifold of $(\hat{x},0)$. Uniqueness of solutions prevents any solution belonging to the unstable manifold of $(\hat{x},0)$ from intersecting the stable manifold at $y=0$. The global stability properties grants that the system (\ref{2D-model}) is Morse-Smale. Structural stability is a consequence of Palis' and Smale's theorem (1970)\nocite{Palis.AMS_Proc:14}, cf. Guckenheimer and Holmes (1983)\nocite{guck}.
\end{proof}
Is there any reason to continue an analysis of (\ref{2D-model}) at a stage when we know how its parameters evolve with evolution and when we know all its qualitative properties? In fact, our study begins here. Let us focus on the pre-vaccination stage until we reach Section \ref{vacc}. Thus, $\omega=0$. Assume now that the birth rate $\lambda$ remains much smaller than the parameters related to disease transmission. These are the contact rate $\beta$ and the removal rate $\gamma$. We start checking the case $\lambda=0$. In this case we get $\rho_0=R_0=\beta/\gamma$ and our system (\ref{2D-model}) takes the form
\begin{eqnarray}
  \dot{x}&=&-\beta xy,\nonumber\\
  \dot{y}&=&\beta y\left(x-\frac{1}{\rho_0}\right).
  \label{Kermack}
\end{eqnarray}
Many results depend just on the parameter $\rho_0$ now because introduction of dimensionless time removes the second parameter $\beta$, cf e. g Figure \ref{ev_number_ill}.
The problem with (\ref{Kermack}) is that it ceases to be structurally stable (Guckenheimer and Holmes (1983)\nocite{guck}). The reason is the set of neutrally stable fixed points at $y=0$. All results are sensitive to various assumptions made in the model. A maximum for the proportion of infected individuals occurs for $x=\gamma/\beta=1/\rho_0$. If $1/\rho_0$ is just slightly below $1$ then we may call the proportion $1-1/\rho_0$ the excess density as in Kermack and McKendrick (1927)\nocite{Kermack.ProcCRSocLonA:115}. We note that in this case the eventual proportion of ill is approximately the double of the excess (cf. solid red phase curve in Figure \ref{picture_series1}(c) and solid red epidemic curve in Figure \ref{picture_series1}(d)). This confirms classical results that disease control measures decreases the eventual number of ill (cf. Figure \ref{ev_number_ill}). Disease control measures do not just flatten the epidemic curve in order to avoid peak healthcare efforts but save a substantial part of the population from encountering
the disease at all, too. If $1/\rho_0$ is close to zero then the disease affects a large proportion of the population (cf. dotted red phase curve in Figure \ref{picture_series1}(c) and dotted red epidemic curve in Figure \ref{picture_series1}(d)). The reason for that the solid red epidemic curve develops slower in Figure \ref{picture_series1}(d) than the dotted epidemic curve is in Figure \ref{picture_series1}(c): The solid curve remains closer to the equilibria at the horizontal axis of (\ref{Kermack}) and the system is continuous.

The birth rate is certainly not zero but it is in many cases a very small quantity in comparison to the contact rate and the removal rate. We therefore, assume that
\begin{equation}
0<\mu<\lambda<<\gamma<\beta
\label{ess-ineq}
\end{equation}
We keep the natural death rate in this important assumption since it will reappear in the stochastic version of the model. The second inequality in (\ref{ess-ineq}) is needed to grant a positive survival probability of the population (cf. Section \ref{modelsection}) and structural stability of the submodel for the population.

The last inequality in (\ref{ess-ineq}) is satisfied in the most interesting case
$\rho_0>1$ allowing for an invasion of the disease in the absence of a vaccine. The equilibrium proportion of infected individuals is then given by
\begin{eqnarray*}
y_\ast&=&F\left(\frac{\lambda+\gamma}{\beta}\right)=\frac{\lambda+\omega}{\beta}\left(\frac{\frac{\lambda}{\lambda+\omega}-\frac{\lambda+\gamma}{\beta}}{\frac{\lambda+\gamma}{\beta}}\right)
=\frac{\lambda+\omega}{\beta}\left(\frac{\frac{\lambda}{\lambda+\omega}}{\frac{\lambda+\gamma}{\beta}}-1\right)\\
&=&\frac{\lambda}{\lambda+\gamma}-\frac{\lambda}{\beta}-\frac{\omega}{\beta}\leq\frac{\lambda}{\beta}\left(\frac{\beta}{\lambda+\gamma}-1\right)
\end{eqnarray*}
With $\lambda<<\gamma<\beta$ this means that the equilibrium proportion of infected individuals is a very small proportion. If the population is isolated at an island, it might consist of just a few individuals. The more isolated the population is,
the less valid the differential equation model (\ref{number_model}) is and a modeling technique based on stochastic processes like the one mentioned in (\ref{birth-death-diff-diff}) and Bailey (1964)\nocite{Bailey.Elements} is
more appropriate. The assumption of non-zero birth rates change the epidemic curves in Figure \ref{picture_series1}(d) considerably. Relaxing disease control measures has the consequence that the average proportion of infected individuals establishes itself at a considerably higher level after the first wave; see Figure \ref{picture_series2}(a). The reason is that the hyperbola $y=F(x)$, cf. (\ref{hyperbola})
is strictly decreasing (the blue curve in Figures \ref{picture_series1}(a)-(b)) for positive birth rates. The consequence is that the asymtotic proportion of infected individuals increases with relaxed disease control measures. This observation was quite clear in the case of Sweden in comparison to its neighboring countries during the CoViD-19 pandemic. Although the birth rate remains negligible in comparison to the disease parameters, assuming zero birth rates will hide such information.

We have used unusually high birth rates in the figures of this section in order to make key patterns visible. Realistic parameter values would force the hyperbola $y=F(x)$ to approach a degenerate form close to the coordinate axes. Nevertheless, it remains a decreasing and concave up function of $x$ giving support for low average proportions of infected individuals. This is the key observation.

\begin{figure}
\epsfxsize=128mm
\begin{picture}(264,155)(0,0)
\put(0,0){\epsfbox{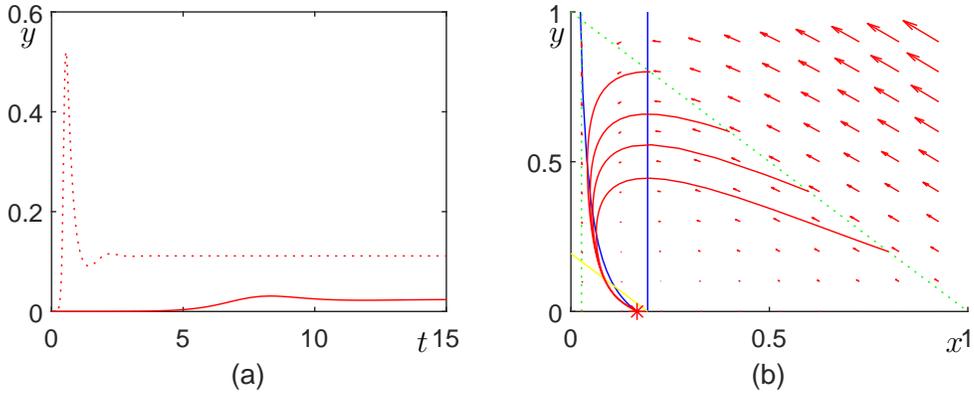}}


\put(155,15){$t$}
\put(5,132){$y$}

\put(355,15){$x$}
\put(205,132){$y$}
\end{picture}
\caption{(a) The proportion of infected individuals against time  for $\lambda=.8$, $\beta=30$, and $\gamma=5$ (dotted red curve) and for $\lambda=.8$, $\beta=7$, and $\gamma=5$ (solid red curve). (b) A sufficient vaccination as is the case for $\lambda=.8$, $\beta=30$, $\gamma=5$, and $\omega=4$ ensures a globally stable equilibrium in the herd immunity region.} \label{picture_series2}
\end{figure}

\section{On the stochastic formulation}
\label{stoch_form}

In the deterministic case we received complete qualitative results for our model after transforming its variables to proportions instead of abundances. The result was that the asymptotic proportion of infected individuals was very low possibly corresponding to just a few individuals. Such situations must be described by the stochastic models introduced in Section \ref{modelsection}. One problem is that this model is inherently abundance formulated and not proportion formulated. We therefore, start the build-up of a stochastic model
from the abundance formulated model we started from in (\ref{number_model}) with one strain. Thus, we use
\begin{eqnarray}
  \dot{S}&=&\lambda N-\beta S\frac{I}{N}-\omega S-\mu S,\nonumber\\
  \dot{I}&=&\beta S\frac{I}{N}-\gamma I -\mu I,\label{number_model_red}\\
  \dot{R}&=&\gamma I+\omega S-\mu R,\nonumber
\end{eqnarray}
as a starting point and let $q_{S,I,R}$ denote the probability that the number of susceptible is $S$, the number of infected is $I$, and the number of removed is $R$. We keep in mind that evolution tends to maximize $\rho_0=\beta/(\lambda+\gamma)$ in most cases. The differential equations for these probabilities then read
\begin{eqnarray}
  \dot{q}_{S,I,R}&=&\mu((S+1)q_{S+1,I,R}+(I+1)q_{S,I+1,R}+(R+1)q_{S,I,R+1})\nonumber\\
  &&-\mu(S+I+R)q_{S,I,R}+\lambda(S+I+R-1)q_{S-1,I,R}\nonumber\\
  &&-\lambda(S+I+R)q_{S,I,R}+\gamma(I+1)q_{S,I+1,R}-\gamma I q_{S,I,R}\nonumber\\
  &&+\beta(S+1)\frac{I-1}{S+I+R}q_{S+1,I-1,R}-\beta S\frac{I}{S+I+R}q_{S,I,R},\nonumber\\
  &&+\omega(S+1)q_{S+1,I,R}-\omega S q_{S,I,R}\nonumber\\
  && \:S=1,2,\dots,\:I=1,2,\dots,\:R=0,1,2,\dots\nonumber\\
  \dot{q}_{0,I,R}&=&\mu(q_{1,I,R}+(I+1)q_{0,I+1,R}+(R+1)q_{0,I,R+1})-\mu(I+R)q_{0,I,R}\nonumber\\
  &&-\lambda(I+R)q_{0,I,R}+\gamma(I+1)q_{0,I+1,R}-\gamma I q_{0,I,R}\label{epid-diff-diff}\\
  &&+\beta\frac{I-1}{I+R}q_{1,I-1,R}+\omega q_{1,I,R},\:S=0,\: I=1,2,\dots,\: R=0,1,2\dots \nonumber\\
  \dot{q}_{S,0,R}&=&\mu((S+1)q_{S+1,0,R}+q_{S,1,R}+(R+1)q_{S,0,R+1})\nonumber\\
  &&-\mu(S+R)q_{S,0,R}+\lambda(S+R-1)q_{S-1,0,R}-\lambda(S+R)q_{S,0,R}\nonumber\\
  &&+\gamma q_{S,1,R}+\omega(S+1)q_{S+1,0,R}-\omega Sq_{S,0,R},\nonumber\\
  &&\:S=1,2,\dots,\: I=0,\: R=0,1,2,\dots\nonumber\\
  \dot{q}_{0,0,R}&=&\mu(q_{1,0,R}+q_{0,1,R}+(R+1)q_{0,0,R+1})-\mu Rq_{0,0,R}\nonumber\\
  &&-\lambda Rq_{0,0,R}+\gamma q_{0,1,R}+\omega q_{1,0,R},\:S=0,\:I=0,\: R=0,1,2,\dots\nonumber
\end{eqnarray}
and we implemented these equations for simulation studies in order to find out the dependence of the size of the population on the results. We have not been able to find a model similar to (\ref{epid-diff-diff}) anywhere in the
literature. We will present the subsequent simulation results in proportion form. This grants that the simulation results of (\ref{epid-diff-diff}) are directly comparable to corresponding simulations of (\ref{2D-model}).
We identify possible limiting cases before continuing with simulation studies. We start with the following theorem.
\begin{Theorem}
The total probability $\sum_{S=0}^{\infty}\sum_{I=0}^{\infty}\sum_{R=0}^{\infty}q_{S,I,R}(t)$ is an invariant for \em (\ref{epid-diff-diff}) \em and the total contribution to changes in the total probability related to each of the parameters is zero.
\label{probability_invariance_thm}
\end{Theorem}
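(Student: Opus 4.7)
The plan is to mimic the two-step bookkeeping used in Lemma~\ref{probability_invariance_lemma}: sum the right-hand side of (\ref{epid-diff-diff}) over all admissible triples $(S,I,R)\in\mathbf{N}^3$, partition the resulting expression into five groups according to which rate parameter ($\mu$, $\lambda$, $\gamma$, $\beta$, or $\omega$) multiplies it, and show that each group telescopes to zero after a single index shift. This simultaneously establishes that $\frac{d}{dt}\sum_{S,I,R}q_{S,I,R}(t)=0$ and the stronger claim that every parameter's contribution to the total vanishes.

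For each linear rate I would treat the gain and loss contributions as a matched pair. For the mortality rate $\mu$, the three gain contributions $(S+1)q_{S+1,I,R}$, $(I+1)q_{S,I+1,R}$, $(R+1)q_{S,I,R+1}$ in the generic equation, together with the corresponding pieces in the three boundary equations, match the loss $-(S+I+R)q_{S,I,R}$ after the substitutions $M=S+1$, $M=I+1$, $M=R+1$; the separate boundary equations are precisely what is needed so that no term is double-counted or omitted when one of $S,I,R$ is zero. The birth parameter $\lambda$ is handled by the single reindexing $M=S-1$ in $\lambda(S+I+R-1)q_{S-1,I,R}$, producing $\lambda(M+I+R)q_{M,I,R}$, which cancels $-\lambda(S+I+R)q_{S,I,R}$ after summation. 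The recovery rate $\gamma$ and the vaccination rate $\omega$ are structurally identical to the $\mu$-terms applied only to the $I$ and $S$ components, respectively.

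The one genuinely new computation is for the infection rate $\beta$, and this is where I expect the main obstacle to lie since the transition rate $\beta(S+1)(I-1)/(S+I+R)$ depends nontrivially on the joint state. The decisive observation is that $S+I+R$ is invariant under the transition $(S+1,I-1,R)\to(S,I,R)$; hence under the substitution $(S',I')=(S+1,I-1)$ the gain sum
\begin{displaymath}
\sum_{S,I,R}\beta\,\frac{(S+1)(I-1)}{S+I+R}\,q_{S+1,I-1,R}
\end{displaymath}
transforms into $\sum_{S',I',R}\beta S'I'q_{S',I',R}/(S'+I'+R)$ and cancels the loss $-\sum\beta S I q_{S,I,R}/(S+I+R)$ term by term. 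The remaining work is bookkeeping: verifying that the four cases in (\ref{epid-diff-diff}) -- generic, $S=0$, $I=0$, and $S=I=0$ -- together with the convention $q_{S,I,R}\equiv 0$ whenever any index is negative, exactly account for the boundary adjustments required by each reindexing. Once written out explicitly for one parameter, the pattern recurs verbatim for the remaining four.
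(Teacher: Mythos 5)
Your proposal is correct and follows exactly the route the paper intends: the paper omits the explicit computation and states only that the proof uses the techniques of Lemma~\ref{probability_invariance_lemma}, i.e.\ the parameter-by-parameter telescoping you describe. Your key observation that $S+I+R$ is preserved under the infection transition $(S+1,I-1,R)\to(S,I,R)$, so that the $\beta$-gain reindexes to cancel the $\beta$-loss term by term, is precisely the point that makes the nonlinear term behave like the linear ones.
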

The proof of this theorem is based on the techniques alluded to in Lemma \ref{probability_invariance_lemma}.
We also conclude that extinction of the infection is an absorbing state and that extinction of the population is an absorbing state within that state. The proof follows standard techniques and we give it in Appendix \ref{Appendproof_absorbing}.
\begin{Theorem}
The probabilities $q_{0,0,0}(t)$ and $\sum_{S=0}^\infty\sum_{R=0}^\infty q_{S,0,R}(t)$ increase with time.
\label{absorbing_state_theorem}
\end{Theorem}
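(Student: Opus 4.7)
The plan is to prove both monotonicity statements directly from the differential-difference equations (\ref{epid-diff-diff}), using exactly the summation/telescoping technique that was illustrated in Lemma \ref{probability_invariance_lemma}. The governing intuition is that both $\{S=I=R=0\}$ and $\{I=0\}$ are absorbing sets for the underlying continuous-time Markov chain: no transition has positive rate out of either set, so the probability mass contained in each can only grow.

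For the first statement, the approach is computational and immediate. I would specialize the last equation of (\ref{epid-diff-diff}) to $R=0$, so that all outflow terms (those proportional to $R$) vanish and only the inflow contributions $\mu q_{1,0,0}$, $\mu q_{0,1,0}$, $\mu q_{0,0,1}$, $\gamma q_{0,1,0}$ and $\omega q_{1,0,0}$ remain. Collecting yields
\begin{equation*}
\dot q_{0,0,0} = (\mu+\omega)\,q_{1,0,0} + (\mu+\gamma)\,q_{0,1,0} + \mu\,q_{0,0,1},
\end{equation*}
which is a sum of non-negative probabilities; hence $q_{0,0,0}(t)$ is non-decreasing.

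For the second statement I would sum the two relevant families of equations in (\ref{epid-diff-diff})—the one for $\dot q_{S,0,R}$ with $S\ge 1$, $R\ge 0$ and the one for $\dot q_{0,0,R}$ with $R\ge 0$—and reorganize the right-hand side by the parameter it involves. Following the bookkeeping of Lemma \ref{probability_invariance_lemma}, each of the within-$\{I=0\}$ parameter groups is expected to telescope to zero: the $\mu$-terms arising from death of susceptibles, the $\mu$-terms arising from death of removed, the $\lambda$-terms arising from births, and the $\omega$-terms arising from vaccination. In each case one performs an index shift ($T=S+1$, respectively $T=R+1$, respectively $S'=S-1$) and matches inflow against outflow, with the boundary equation at $S=0$ supplying precisely the term that closes the telescope. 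The key structural point is that the infection transition has rate $\beta S I/(S+I+R)$, which vanishes identically on $\{I=0\}$, so there is no outflow from the set. What remains after cancellation is the net inflow from $\{I=1\}$, namely death of the single infected individual and recovery, giving
\begin{equation*}
\frac{d}{dt}\sum_{S=0}^{\infty}\sum_{R=0}^{\infty} q_{S,0,R}(t) \;=\; (\mu+\gamma)\sum_{S=0}^{\infty}\sum_{R=0}^{\infty} q_{S,1,R}(t) \;\ge\; 0.
\end{equation*}

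The main obstacle is justifying the term-by-term manipulation of the double (triple, if one counts the time derivative under the sum) series. Unlike Lemma \ref{probability_invariance_lemma}, where invariance was only asserted for the finite lemma setting, here convergence of $\sum_{S,R} q_{S,0,R}$ and of the associated mean-type series $\sum (S+R) q_{S,0,R}$ must be controlled so that index shifts and the interchange of $d/dt$ with the summation are legal. This can be handled by truncating both sums to $S,R\le M$, carrying out the telescoping on the finite rectangle (where only non-cancelling boundary contributions at the far edge appear), and then passing to $M\to\infty$ using the dominated convergence theorem together with the a priori estimate $\sum q_{S,I,R}\le 1$ and the finiteness of the expected population size inherited from the birth--death estimates of Section \ref{modelsection}. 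Once this technical step is in place, positivity of the residual inflow terms delivers both conclusions.
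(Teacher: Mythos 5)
Your proposal is correct and follows essentially the same route as the paper's proof in Appendix A: the first claim is read off directly from the $\dot q_{0,0,0}$ equation, and the second follows by summing the $I=0$ equations and telescoping each parameter group exactly as in Lemma \ref{probability_invariance_lemma}. Two small points in your favour: your final expression $(\mu+\gamma)\sum_{S,R}q_{S,1,R}$ is the fully cancelled net inflow (the paper's last displayed line retains a spurious $\mu\sum_{R}q_{1,0,R}$ term that in fact telescopes away against the susceptible-death outflow, though this does not affect the sign conclusion), and your truncation-and-limit justification for the infinite index shifts addresses a convergence issue the paper passes over in silence.
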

The birth-death processes of model (\ref{epid-diff-diff}) coincide with those of the birth-death model (\ref{birth-death-diff-diff}). Therefore, solutions of (\ref{epid-diff-diff}) coincide with the known solutions (\ref{extinction_formula}) of (\ref{birth-death-diff-diff}) in a limiting case. This is an important part of the validation of the numerical version of (\ref{epid-diff-diff}).

We illustrate typical simulation results in Figure \ref{probabilities}. We use $\lambda=.0051$, $\mu=.005$, $\gamma=.2$, $\omega=0$ and $\beta=1.5$ to illustrate a typical outcome and the initial conditions are $S(0)=209$, $I(0)=1$, and $R(0)=0$. This selection gives rise to just negligible probabilities corresponding to population sizes above $300$ as long as we limit the simulation to the time interval $0\leq t\leq 200$. Our numerical model for (\ref{birth-death-diff-diff}) contains $4,545,100$ equations as this size limit is applied. This allowed for
completing the simulation on a laptop during a few days without encountering memory problems.

The thin green curve on the top of the diagram illustrates the $\log_{10}$-value of the survival probability of the population indicating that probability of survival for the population is almost 1. The thick green curve illustrates the $\log_{10}$-value of the probability of survival of the infection. This probability drops rapidly after $t=150$ which makes it hard to do particular simulations of (\ref{epid-diff-diff}) that last longer than this
(cf. Bartlett (1957)\nocite{Bartlett.JRSS_A:120} and Black (1966)\nocite{Black.JTB:11}). We denote the expectations of the proportions of the susceptible individuals, infected individuals, and removed individuals by solid curves in blue, red, and yellow, respectively. These values have no tendency to show oscillatory behavior and we expect that this behavior to be even more robust for larger populations that have more valid diffusion approximations, see e. g Allen (2008)\nocite{Brauer_Driesche_cpt3}. Possible instabilities have therefore, not their origin in the time-dependent probabilities the model (\ref{epid-diff-diff}) themselves.

The instabilities observed for the red curve in Figure \ref{probabilities} as the expected proportion of infected individuals reaches values about $10^{-3}$ is a numerical phenomenon. It originates from the fact that a large number of the probabilities approach zero. The numerical model might suggest tiny negative probabilities that we replaced by zeros.

We used dashed curves in the corresponding colors to illustrate the corresponding predictions of the deterministic model (\ref{number_model_red}) in the same diagram. According to Theorem \ref{glob-stab-thm}, they predict damped oscillations. We indicated one particular simulation of (\ref{epid-diff-diff}) up to extinction by dotted curves in the corresponding colors. In general, extinction of the disease occurs before the drop in the survival probability of the disease takes place after $t=150$. We conclude that the fluctuations especially in the fraction of infected individuals are quite violent before extinction of the disease. This indicates that the instability is a feature of the particular realizations of orbits and not the frequency distributions that act as solutions to (\ref{epid-diff-diff}).
\begin{figure}
\epsfxsize=128mm
\begin{picture}(264,270)(0,0)
\put(0,0){\epsfbox{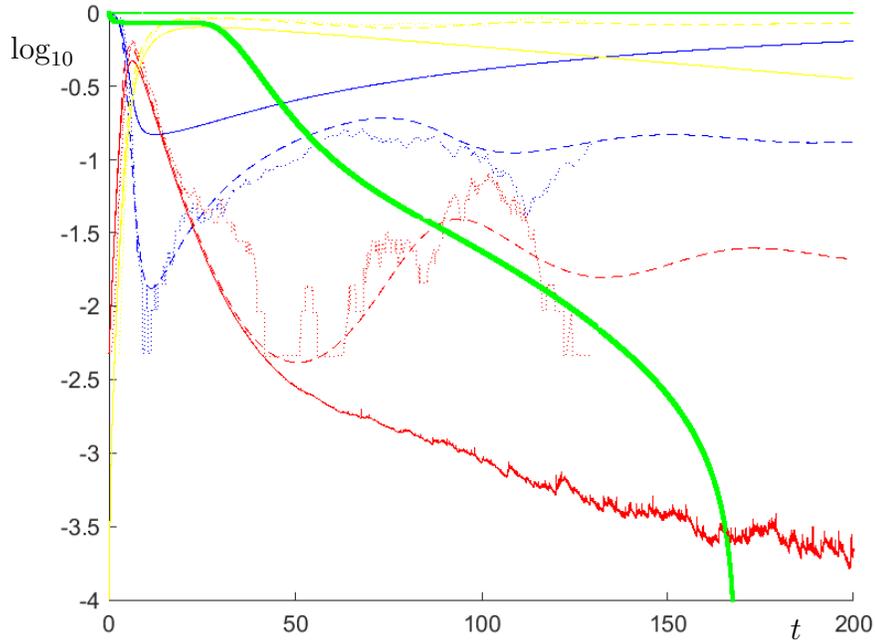}}
\put(10,235){$\log_{10}$}
\put(305,15){$t$}
\end{picture}
\caption{$\log_{10}$ values of either proportions or probabilities. Model (\protect\ref{epid-diff-diff}) is compared to its deterministic counterpart (\protect\ref{number_model_red}) and the parameters are $\lambda=.0051$, $\mu=.005$, $\gamma=.2$,  $\omega=0$ and $\beta=1.5$. The initial conditions are $S(0)=209$, $I(0)=1$, and $R(0)=0$. The thin and thick green curves are the survival probabilities of the population and the infection, respectively. The solid blue, red, and yellow curves are the expected proportions of susceptible, infected and removed individuals according to (\protect\ref{epid-diff-diff}). The dashed curves are the corresponding simulation of the deterministic model (\ref{number_model_red}). We indicate a particular realization of an outcome from model (\protect\ref{epid-diff-diff}) by dotted curves.}
\label{probabilities}
\end{figure}

In Figure \ref{probabilities_mc}, we dropped the simulations of the probabilities of (\ref{epid-diff-diff}) for a case where the model has grown out of bounds. Here we have $\lambda=.0051$, $\mu=.005$, $\gamma=.2$, $\omega=0$, and $\beta=1.5$ but now the initial conditions are $S(0)=1499$, $I(0)=1$, and $R(0)=0$. However, we still simulated particular simulations up to extinction with dotted curves and solutions of the deterministic model (\ref{number_model_red}) with dashed curves. Because of the larger population, the infection is less prone to go
extinct and in many cases, we can continue the simulation at least until $t=1000$ (cf. Bartlett (1957)\nocite{Bartlett.JRSS_A:120}). Oscillations in the proportion of infected individuals remain after stabilization of the solution of (\ref{number_model_red}) near the endemic equilibrium. Note that the vertical axis has $\log_{10}$-scale so the amplitude of the oscillations of the infected proportion are larger than they seem at a first glance. The maxima correspond to proportions of infected of about $10^{-1.5}\approx .316$ whereas the minima correspond to proportions of about $10^{-2.2}\approx .0063$. The first maximum and its subsequent minimum correspond to an even higher amplitude, making it difficult for very infectious diseases to enter and remain endemic in small populations. Therefore, very infectious diseases either require large populations in order to remain endemic or evolve in a more infectious direction with time.

\begin{figure}
\epsfxsize=128mm
\begin{picture}(264,290)(0,0)
\put(0,0){\epsfbox{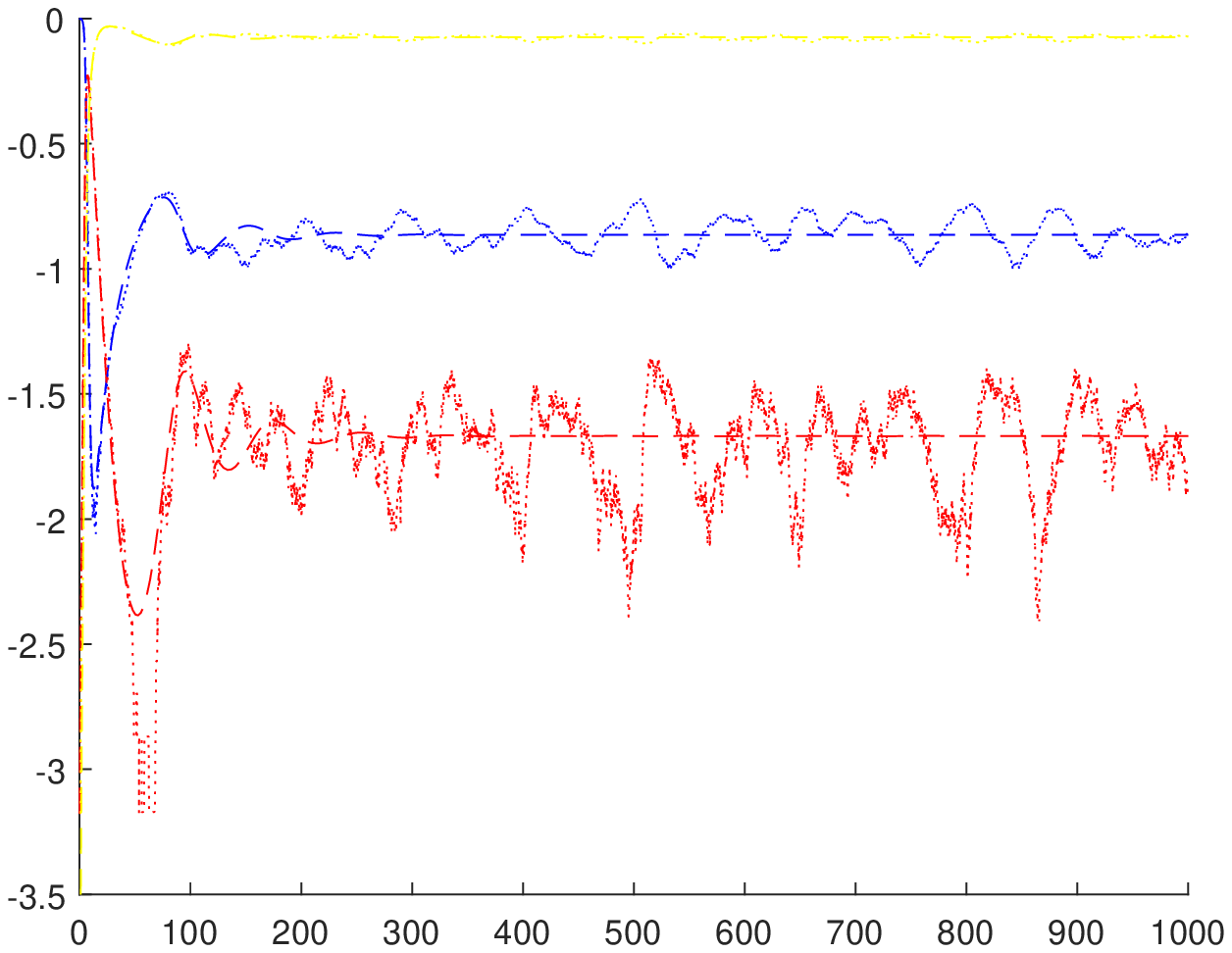}}
\put(-5,260){$\log_{10}$}
\put(335,0){$t$}
\end{picture}
\caption{$\log_{10}$ values of proportions of susceptible (blue), infected (red), and removed (yellow) individuals as predicted by (\ref{epid-diff-diff}) (dotted) and (\ref{number_model_red}) (dashed). Parameters used are
$\lambda=.0051$, $\mu=.005$, $\gamma=.2$, $\omega=0$, and $\beta=1.5$ and the used initial conditions are $S(0)=1499$, $I(0)=1$, and $R(0)=0$. It does not make sense to simulate frequency distributions of (\protect\ref{epid-diff-diff}) because any valid numerical model becomes far too large. The dashed curves are the corresponding simulation of the deterministic model (\ref{number_model_red}). We indicate a particular realization of an outcome from model (\protect\ref{epid-diff-diff}) by dotted curves.}
\label{probabilities_mc}
\end{figure}

\section{Vaccination and herd-immunity}
\label{vacc}

We have so far considered (A) social distancing measures basically changing the contact rate $\beta$, (B) quarantine and contact tracing measures removing possible infective individuals by changing the removal rate
$\gamma$, and (C) travel restriction measures isolating some part of the population. In terms of model (\ref{2D-model}) the measures (A) and (B) are both equivalent to a change in $\rho_0$ whereas the measure (C) hits the
long-run tendency that an epidemic will affect a quite small proportion of the population consisting of just a few individuals if the population remains isolated enough. Subsequent longer periods giving support for not more than
a few infected individuals in the population increases the probability of extinction of the disease rapidly.

We now sketch the situation for some cases with vaccines available, ie $\omega>0$ and limit this discussion to the deterministic model (\ref{2D-model}). The endemic equilibrium is globally asymptotically stable when a hyperbola (the susceptible isocline) intersects a vertical line (the infective isocline) in the interior of compact triangle $x\geq 0$, $y\geq 0$, $x+y\leq 1$. Social distancing measures (A) and quarantine and contact tracing measures (B) both shifted
the vertical line to the right but vaccination will shift the hyperbola to the left (removal of susceptibles) and will allow for the possibility of removing the disease by forming a globally stable equilibrium inside the herd immunity region
\begin{displaymath}
x\geq 0,\: y\geq 0,\: x+y\leq \frac{1}{\rho_0},
\end{displaymath}
see Figure \ref{picture_series2}(b). Such a possibility does not exist without vaccination. In fact, the herd immunity region is never defined if social distancing, quarantine, or contact tracing measures are used since they all change $\rho_0$. In the case of natural transmission, the concept of herd immunity is either irrelevant (Figure \ref{picture_series1}(a)), not well defined or not a domain of attraction
(Figure \ref{picture_series1}(b)), or sensitive to modeling assumptions and structural instability (Figure \ref{picture_series1}(c)).

\section{Summary}
\label{sum}

Much confusion about epidemic models have been prevailing over decades and we think it is time to describe the mechanism that is responsible for the subsequent transitions between stochastic and deterministic behavior in such models.

Our assumption of an exponential growth of the population and the fact that population parameters like natural birth and death operate on a slower time-scale than disease parameters like
contact and removal rates ensures that globally stable endemic equilibria correspond to low proportions of infected individuals. Low proportions of individuals correspond to low numbers of individuals if the population is isolated or
small. These low proportions decrease with increased pre-vaccination disease control measures aiming at control of the infectiousness of the disease.

Low asymptotic average proportions constitute a mechanism for a transition from primarily deterministic dynamics to stochastic dynamics. Subsequent waves start therefore,
usually as local outbreaks at seemingly randomly selected places.

Our description in terms of proportion makes another result clear, too. The strain with the best ability to increase its proportion outcompetes all other strains whenever the proportion of susceptible remain near an endemic equilibrium. The expected effect of this evolution theorem is
that the contact rate first increases until the pathogen has reached some physiological bound. After this, the pathogen attempts to decrease the removal rate by avoiding virulence and quarantine measures (cf. e. g. Smith (1887)\nocite{Smith.AMNat:21}).

The stochastic formulation of the model includes the possibility of extinction of both the population and the disease. If the size of the population is small, such probabilities are usually high. Diseases can go extinct on isolated
islands. We remind the reader that during the CoViD-19 epidemic the disease was easier to get under control at societies with easily controlled boarders (Iceland, Taiwan, Oceania, New Zeeland,
Australia, Japan, and South Korea). The smaller community, the less sensitive the result was to the set of disease control strategies actually applied.

The smaller the unit with easily controlled boarders is, the larger the probability of extinction. As soon as the population size becomes sufficiently large for keeping the probability of extinction low, oscillatory patterns that are
not present in the deterministic model become visible. It is therefore, essential to implement regional quarantine measures as soon as we observe local outbreaks.

The frequency distributions of the stochastic model do neither display any time-dependent nor oscillatory patterns. Diffusion approximations (see e. g Allen (2008)\nocite{Brauer_Driesche_cpt3})  remain close enough to ensure sufficient stability. Instead, it is the particular realizations of the model that display the oscillatory pattern. These oscillations might introduce violations of the competitive exclusion principle alluded to above but this is not very likely.
The formulation of useful conditions for such violations remains an open problem.

We end up by a discussion of the herd-immunity concept (Plans-Rubi{\'{o}} (2021)\nocite{Plans-Rubio.Vaccines:9}) and conclude that this concept remains without meaning in connection to all pre-vaccination disease control strategies (cf. e. g Jones and Helmreich (2020)\nocite{Lancet.Jones:396}). If we have $\rho_0\leq 1$, then we are automatically in the herd-immunity region. The concept is irrelevant. If we have $\rho_0>1$, then the herd-immunity region is not an attraction domain and the average proportion of infected individuals will tend
to remain unchanged with time. Natural transmission cannot produce herd-immunity for eradication of the disease. In addition, the objective of pre-vaccination disease control strategies is to make $\rho_0$ smaller. Such measures
change the herd-immunity region and introduce problems with the definition of such a concept.

Finally, we comment the concept of herd-immunity in the case a disease is so infectious that it will infect almost the whole population in the first wave (Kermack and McKendrick (1927)\nocite{Kermack.ProcCRSocLonA:115}).
In this case, we are close to a structurally unstable case (Guckenheimer and Holmes (1983)\nocite{guck}) and the number of infected individuals will be extremely low after the first wave, too. The probability of extinction of
the disease is after such an event huge. Measles is a very infectious disease that could be practically extinct during the pre-vaccination era for many years at islands containing populations of up to some hundred thousand individuals (Bartlett (1957)\nocite{Bartlett.JRSS_A:120}, Black (1966)\nocite{Black.JTB:11}, and Cliff and Haggett (1980)\nocite{Cliff.JHygiene:85}). However, measles was never eradicated. We may naturally discuss the spectrum of phenomena that we would like to include in a herd immunity concept. However, we think that we create a mess around the concept if we include population size in the concept. If herd-immunity is used then we must restrict it to relate to well-defined proportions of removed individuals only. Such criteria hold only when we are able to apply vaccination strategies successfully enough to make all other disease control measures superfluous.

Evolution makes diseases more infectious with time (Theorem \ref{glob-stab-thm}) and it sounds natural to ask how infections they be at the end of this paper. Eventually, the infectiousness of the disease is most likely controlled by the population size (e. g (Bartlett (1957)\nocite{Bartlett.JRSS_A:120}). The reason is that the probability of extinction of the disease is larger for smaller populations. Figure \ref{probabilities} demonstrated an example of a disease with $R_0\approx 7.317$ and $\rho_0=7.314$. The probability of survival of the disease drops rapidly after $150$ time units for an initial population size of $210$ individuals. An equally infectious disease remains in the population almost certainly at least $1000$ time units for an initial population size consisting of $1500$ individuals, compare Figure \ref{probabilities_mc}.

\appendix

\section{Proof of Theorem \protect\ref{absorbing_state_theorem}}
\label{Appendproof_absorbing}

\begin{proof}
We have
\begin{displaymath}
\frac{dq_{0,0,0}}{dt}=\mu(q_{1,0,0}+q_{0,1,0}+q_{0,0,1})+\gamma q_{0,1,0}+\omega q_{1,0,0}\geq 0.
\end{displaymath}
and
\begin{eqnarray*}
\sum_{S=0}^\infty\sum_{R=0}^\infty\frac{dq_{S,0,R}}{dt}&=&\mu\sum_{R=0}^\infty q_{1,0,R}+\mu\sum_{R=0}^\infty q_{0,1,R}+\mu\sum_{R=0}^\infty (R+1)q_{0,0,R+1}\\
&&-\mu\sum_{R=0}^\infty Rq_{0,0,R}-\lambda\sum_{R=0}^\infty Rq_{0,0,R}+\gamma\sum_{R=0}^\infty q_{0,1,R}+\omega\sum_{R=0}^\infty q_{1,0,R}\\
&&+\mu\sum_{S=1}^\infty\sum_{R=0}^\infty(S+1)q_{S+1,0,R}+\mu\sum_{S=1}^\infty\sum_{R=0}^\infty q_{S,1,R}\\
&&+\mu\sum_{S=1}^\infty\sum_{R=0}^\infty(R+1)q_{S,0,R+1}-\mu\sum_{S=1}^\infty\sum_{R=0}^\infty(S+R)q_{S,0,R}\\
&&+\lambda\sum_{S=1}^\infty\sum_{R=0}^\infty(S+R-1)q_{S-1,0,R}-\lambda\sum_{S=1}^\infty\sum_{R=0}^\infty(S+R)q_{S,0,R}\\
&&+\gamma\sum_{S=1}^\infty\sum_{R=0}^\infty q_{S,1,R}+\omega\sum_{S=1}^\infty\sum_{R=0}^\infty(S+1)q_{S+1,0,R}\\
&&-\omega\sum_{S=1}^\infty\sum_{R=0}^\infty Sq_{S,0,R}\\
&=&\mu\sum_{R=0}^\infty q_{1,0,R}+\mu\sum_{R=0}^\infty q_{0,1,R}+\mu\sum_{R=1}^\infty R q_{0,0,R}-\mu\sum_{R=0}^\infty Rq_{0,0,R}\\
&&-\lambda\sum_{R=0}^\infty Rq_{0,0,R}+\gamma\sum_{R=0}^\infty q_{0,1,R}+\omega\sum_{R=0}^\infty q_{1,0,R}\\
&&+\mu\sum_{S=2}^\infty\sum_{R=0}^\infty Sq_{S,0,R}+\mu\sum_{S=1}^\infty\sum_{R=0}^\infty q_{S,1,R}+\mu\sum_{S=1}^\infty\sum_{R=1}^\infty Rq_{S,0,R}\\
&&-\mu\sum_{S=1}^\infty\sum_{R=0}^\infty(S+R)q_{S,0,R}+\lambda\sum_{S=0}^\infty\sum_{R=0}^\infty(S+R)q_{S,0,R}\\
&&-\lambda\sum_{S=1}^\infty\sum_{R=0}^\infty(S+R)q_{S,0,R}+\gamma\sum_{S=1}^\infty\sum_{R=0}^\infty q_{S,1,R}\\
&&+\omega\sum_{S=2}^\infty\sum_{R=0}^\infty Sq_{S,0,R}-\omega\sum_{S=1}^\infty\sum_{R=0}^\infty Sq_{S,0,R}\\
&=&\mu\sum_{R=0}^\infty q_{1,0,R}+\mu\sum_{S=0}^\infty\sum_{R=0}^\infty q_{S,1,R}+\gamma\sum_{S=0}^\infty\sum_{R=0}^\infty q_{S,1,R}\geq 0
\end{eqnarray*}
meaning that, the probability of extinction of the population and the disease will increase with time, respectively.
\end{proof}


\bibliographystyle{abbrv}
\bibliography{artiklar}
\end{document}